\RequirePackage[l2tabu, orthodox]{nag}
\RequirePackage{snapshot}

\documentclass[10pt,onecolumn]{extarticle}

\makeatletter
\if@twocolumn
  \usepackage[dvips,a4paper,top=0.5in, bottom=0.5in, left=0.75in, right=0.5in,includefoot,heightrounded]{geometry}
\else
  \usepackage[dvips,a4paper,margin=1in,includefoot,heightrounded]{geometry}
\fi

\usepackage{srcltx}

\usepackage[russian,portuges,english]{babel}

\iflanguage{portuges}
    {\newcommand{\keywordname}{Palavras-chaves}}
    {\newcommand{\keywordname}{Keywords}}

\usepackage{amsmath}
\usepackage{amssymb}
\usepackage{amsfonts}
\usepackage{mathtools}
\usepackage{amsthm}

\usepackage{abstract}

\usepackage{graphicx}
\usepackage[usenames,dvipsnames,svgnames,x11names]{xcolor}
\usepackage{subfigure}

\usepackage{booktabs}

\usepackage{setspace}
\usepackage{flushend}

\usepackage{multicol}

\usepackage{cite}

\usepackage{hyperref}
\usepackage[normalem]{ulem}
\usepackage{soul}

\usepackage{enumerate}

\usepackage[noend]{algpseudocode}

\usepackage{algorithm}
\floatname{algorithm}{Listing}

\usepackage{listings}

\usepackage[short,12hr]{datetime}

\usepackage{siunitx}
\usepackage{draftcopy}

       \usepackage{fouriernc}
\newtheorem{proposition}{Proposition}

\newtheorem{lemma}{Lemma}
\newtheorem{corollary}{Corollary}

\newcommand*{\transp}{^{\mathsf{T}}}
\newcommand*{\hermit}{^{\mathsf{H}}}

\usepackage{mleftright}

\newcommand{\iverson}[1]{\ensuremath{ \mleft\llbracket#1\mright\rrbracket }}

\newcommand{\printtitle}{%
\makeatletter
\if@twocolumn

\twocolumn[%
  \maketitle
  \begin{onecolabstract}
    \myabstract
  \end{onecolabstract}
  \begin{center}
    \small
    \textbf{\keywordname}
    \\\medskip
    \mykeywords
  \end{center}
  \bigskip
]
\saythanks
\else
  \maketitle
  \begin{onecolabstract}
    \myabstract
  \end{onecolabstract}
  \begin{center}
    \small
    \textbf{\keywordname}
    \\\medskip
    \mykeywords
  \end{center}
  \bigskip
  \onehalfspacing
\fi
\makeatother
}

\author{%
R.~J.~Cintra%
\thanks{%
R. J. Cintra is with the
Industrial Signal Processing Laboratory,
Department of Technology,
UFPE, Brazil.
E-mail: \url{rjdsc@de.ufpe.br}}
}

\title{%
A Note on the Orthogonalization of Real-valued Trigonometrical Basis Functions}

\newcommand{\myabstract}{%
Adopting an intuitive approach,
this letter
shows
the derivation
of three orthogonal real matrices
based on the Fourier matrix.
Only elementary methods are employed.
}

\newcommand{\mykeywords}{%
Discrete Fourier transform,
orthogonalization,
basis functions,
discrete cosine transform.
}

\date{}

\begin{document}

\printtitle

\section{Introduction}

The discrete Fourier transform (DFT)
is
a central tool in
signal processing,
numerical analysis,
and
statistics,
among other fields.
Because the DFT is defined over the complex field~$\mathbb{C}$,
its
implementation
requires complex arithmetic,
which might lead to increased computational complexity
in fast algorithms~\cite{blahut2010fast}.
Literature archives several analysis methods
aiming
at circumventing
the need for complex calculations.
Probably,
the two most prominent real-valued methods
are
(i)~the discrete Hartley transform (DHT)
and
(ii)~the discrete cosine transform (DCT),
in particular the DCT of type~II~\cite{ahmed1974discrete}.

The DHT formulation is quite simple and elegant.
Indeed,
the DHT can be expressed
in terms of the real and imaginary parts of the DFT.
Conversely, the DFT
is obtained from
the even and odd parts
of the DHT~\cite[p.~29]{bracewell1986hartley}.
The $N$-point DHT kernel
is simply
$\operatorname{cas}\left( \frac{2\pi}{N}k n\right)$,
$k,n=0,1,\ldots,N-1$,
where
$\operatorname{cas}(x) = \cos(x)+\sin(x)$~\cite[p.~27]{bracewell1986hartley}.

On the other hand,
DCT
has been defined and derived by several
approaches in literature.
The various methods
for obtaining the DCT
encompass:
(i)~the discretization of the Fourier cosine transform~\cite{kitajima1980symmetric};
(ii)~the properties of Chebyshev polynomials~\cite{ahmed1974discrete,ahmed1975orthogonal,kitajima1980symmetric};
(iii)~the discrete solution of undamped harmonic oscillator equation~\cite{strang1999discrete};
(iv)~the construction of periodic, even symmetric sequences~\cite[p.~589]{oppenheim1999discrete};
(v)~the limiting case of the Karhunen-Lo\`eve transform~\cite{flickner1982derivation},
and
(vi)~the diagonalization of the autocovariance matrix of highly correlated,
stationary, Markov type-1 signals~\cite[p.~28]{rao1990discrete}.
In contrast to the DHT case,
the DCT derivation tends to be more sophisticate.

Inspired by~\cite{kay2006intuitive}
and
intended to the undergraduate readers,
in this letter,
we
aim at the following goal.
Based on the discrete Fourier transform basis functions,
we seek to derive a real, orthogonal transformation matrix
by
(i)~using simple mathematical methods
and
(ii)~not requiring any matrix property besides orthogonality itself.
By simple methods,
we subjectively mean
the set of commonly available tools
to first-year undergraduate students in Engineering,
such as linear algebra as in~\cite{strang2006linear}.

\section{Background and Notation}

\subsection{Notation and Definitions}

We adopt the following notation.
Transposition and conjugate transposition
are denoted by $\transp$ and $\hermit$, respectively.
The Iverson bracket
$\iverson{P}$ returns 1, if $P$ is true; 0, otherwise~\cite{knuth1992two}.
Vectors are column vectors.
Vectors and matrices are typed in boldface.
The $N$-point inner product
of vectors
$\mathbf{u}$ and $\mathbf{v}$
is $\langle \mathbf{u}, \mathbf{v} \rangle_N
=
\sum_{n=0}^{N-1} u[n]\cdot (v[n])\hermit$.
A basis
$\{ \mathbf{u}_0, \mathbf{u}_1, \ldots, \mathbf{u}_{N-1} \}$
is orthogonal if
$\langle \mathbf{u}_i, \mathbf{u}_k \rangle = 0$,
for $i\neq k$.
An orthogonal basis is orthonormal
if
$
\langle \mathbf{u}_i, \mathbf{u}_k \rangle = \iverson{i=k}
$~\cite[p.~32]{hsu1969vector}.
The imaginary unit is $j \triangleq \sqrt{-1}$.
Congruence is denoted by $b\equiv a \pmod m$
which means $b-a = k\cdot m$,
where $k$ is an integer.

\subsection{Useful Identities}

\begin{lemma}
\label{lemma-sums}
\begin{enumerate}[(i)]
Let $k$ and $N\neq0$ be integers.
\item
Then:
\begin{align}
\sum_{n=0}^{N-1}
\cos\left(
\frac{2\pi}{N}
kn
\right)
=
N \cdot \iverson{k\equiv 0 \pmod{N}}
.
\end{align}

\item
If $N$ is even, then it holds true that:
\begin{align}
\sum_{n=0}^{\frac{N}{2}-1}
\cos\left(
\frac{2\pi}{N}
kn
\right)
=
\frac{N}{2}
\cdot
\iverson{k\equiv 0 \pmod{N}}
+
\iverson{\text{$k$ odd}}
.
\end{align}
\end{enumerate}
\end{lemma}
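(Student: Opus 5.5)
The plan is to write each cosine as the real part of a complex exponential and then sum a finite geometric series. Throughout, $N$ is taken to be a positive integer, since the index range $0,\ldots,N-1$ requires this. Both parts reduce to computing
\begin{align*}
S_M(w) = \sum_{n=0}^{M-1} w^n ,
\end{align*}
for a suitable root of unity $w$ and a suitable length $M$, and then taking $\operatorname{Re}$. This is legitimate because the cosine sum is the real part of the exponential sum.

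For (i), set $w = e^{j\frac{2\pi}{N}k}$ and $M=N$. If $k\equiv 0 \pmod N$, then $w=1$ and every term of the cosine sum equals $1$, so the sum is $N$. Otherwise $w\neq 1$ while $w^N = e^{j2\pi k}=1$. The closed form $S_N(w) = (1-w^N)/(1-w)$ then gives $0$, so its real part is also $0$. Both cases are captured by $N\cdot\iverson{k\equiv 0 \pmod N}$.

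For (ii), let $M=N/2$ and $w=e^{j\frac{2\pi}{N}k}=e^{j\pi k/M}$. If $k\equiv 0\pmod N$, then again $w=1$ and the sum is $M=N/2$. Since $N$ is even, such a $k$ is even, so the term $\iverson{\text{$k$ odd}}$ correctly vanishes. Otherwise $w\neq 1$ and $w^M = e^{j\pi k} = (-1)^k$, so
\begin{align*}
S_M(w) = \frac{1-(-1)^k}{1-w} .
\end{align*}
For $k$ even (and $k\not\equiv 0 \pmod N$) this is $0$. For $k$ odd it equals $2/(1-w)$.

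The one step that needs care is the odd case: showing that $\operatorname{Re}\bigl(2/(1-e^{j\theta})\bigr)=1$ for $\theta=\pi k/M\notin 2\pi\mathbb{Z}$. I would factor out $e^{j\theta/2}$ to write
\begin{align*}
1-e^{j\theta} = -2j\sin(\theta/2)\,e^{j\theta/2} .
\end{align*}
This gives
\begin{align*}
\frac{1}{1-e^{j\theta}} = \frac{j\,e^{-j\theta/2}}{2\sin(\theta/2)} = \frac{1}{2} + \frac{j}{2}\cot(\theta/2) ,
\end{align*}
whose real part is $1/2$. The real part of $2/(1-w)$ is therefore $1$. Combining the three cases ($k\equiv0 \pmod N$; $k$ even otherwise; $k$ odd) yields exactly
\begin{align*}
\frac{N}{2}\cdot\iverson{k\equiv 0 \pmod N}+\iverson{\text{$k$ odd}} ,
\end{align*}
using again that the two indicator conditions are mutually exclusive because $N$ is even.
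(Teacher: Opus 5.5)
Your proof is correct and has the same skeleton as the paper's: split on whether $k\equiv 0 \pmod{N}$, show that otherwise the sum equals $0$ in (i) and $\frac{1-(-1)^k}{2}$ in (ii), and then use that $N$ even forces $k$ even whenever $k\equiv 0 \pmod{N}$. The only difference is that the paper cites tables for these closed forms (Prudnikov et al.\ for (i), Lagrange's trigonometric identity from Gradshteyn--Ryzhik for (ii)), whereas you derive them from the geometric series together with $\operatorname{Re}\bigl(1/(1-e^{j\theta})\bigr)=\frac{1}{2}$, which makes your argument self-contained.
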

\begin{proof}
(i)~See~\cite[4.4.3.13, p.~641]{prudnikov1998integrals}.

(ii)
By direct application
of~\cite[1.342.2]{gradshteyn2014table}
(Lagrange's trigonometric identity),
it follows that:
\begin{align}
\sum_{n=0}^{\frac{N}{2}-1}
\cos\left(
\frac{2\pi}{N}
kn
\right)
&
=
\frac{N}{2}
\cdot
\iverson{k\equiv 0 \pmod{N}}
+
\frac{1-(-1)^k}{2}
\cdot
\iverson{k\not\equiv 0 \pmod{N}}
\\
&
=
\frac{N}{2}
\cdot
\iverson{k\equiv 0 \pmod{N}}
+
\frac{1-(-1)^k}{2}
\cdot
(
1
-
\iverson{k\equiv 0 \pmod{N}}
)
\\
&
=
\left(
\frac{N}{2}
-
\frac{1-(-1)^k}{2}
\right)
\cdot
\iverson{k\equiv 0 \pmod{N}}
+
\frac{1-(-1)^k}{2}
.
\end{align}
Because $N$ is even,
if $k\equiv 0 \pmod{N}$,
then $k$ is also even and $\frac{1-(-1)^k}{2}=0$.
Notice that
$\frac{1-(-1)^k}{2} = \iverson{\text{$k$ odd}}$.

\end{proof}

\begin{lemma}
\label{lemma-sum-prod}

Let $i$, $k$, and $N\neq0$ be integers.
\begin{enumerate}[(i)]

\item
Then:
\begin{align}
\sum_{n=0}^{N-1}
\cos\left(
\frac{2\pi}{N}in
\right)
\cdot
\cos\left(
\frac{2\pi}{N}kn
\right)
=
\frac{N}{2}
\cdot
\left(
\iverson{i-k\equiv 0 \pmod{N}}
+
\iverson{i+k\equiv 0 \pmod{N}}
\right)
.
\end{align}

\item
If $N$ is even, then it holds true that:
\begin{align}
\label{equation-lemma-2-ii}
\sum_{n=0}^{\frac{N}{2}-1}
\cos\left(
\frac{2\pi}{N}in
\right)
\cdot
\cos\left(
\frac{2\pi}{N}kn
\right)
=
&
\frac{N}{4}
\cdot
\iverson{i-k\equiv 0 \pmod{N}}
+
\frac{N}{4}
\cdot
\iverson{i+k\equiv 0 \pmod{N}}
+
\iverson{\text{$(i+k)$ odd}}
.
\end{align}

\end{enumerate}
\end{lemma}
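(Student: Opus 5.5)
The natural route is the product-to-sum identity
\[
\cos\alpha\cos\beta=\tfrac{1}{2}\left[\cos(\alpha-\beta)+\cos(\alpha+\beta)\right],
\]
applied with $\alpha=\frac{2\pi}{N}in$ and $\beta=\frac{2\pi}{N}kn$. Each summand becomes
\[
\tfrac{1}{2}\cos\left(\tfrac{2\pi}{N}(i-k)n\right)+\tfrac{1}{2}\cos\left(\tfrac{2\pi}{N}(i+k)n\right).
\]
Both sums in the statement then split into two single-cosine sums, which Lemma~\ref{lemma-sums} evaluates directly.

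For part~(i), apply Lemma~\ref{lemma-sums}(i) to each of the two sums, with the integer $k$ there replaced by $i-k$ and by $i+k$. This gives
\[
\tfrac{1}{2}\left(N\iverson{i-k\equiv 0 \pmod{N}}+N\iverson{i+k\equiv 0 \pmod{N}}\right),
\]
which is exactly the claimed expression.

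For part~(ii), with $N$ even, apply Lemma~\ref{lemma-sums}(ii) to each half-range sum in the same way. This yields
\[
\tfrac{N}{4}\iverson{i-k\equiv 0 \pmod{N}}+\tfrac{N}{4}\iverson{i+k\equiv 0 \pmod{N}}+\tfrac{1}{2}\left(\iverson{\text{$i-k$ odd}}+\iverson{\text{$i+k$ odd}}\right).
\]

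The only nonroutine step is simplifying the last term. Since $i-k$ and $i+k$ differ by $2k$, they have the same parity. Hence $\iverson{\text{$i-k$ odd}}=\iverson{\text{$i+k$ odd}}$, and the bracketed term equals $\iverson{\text{$(i+k)$ odd}}$, matching~\eqref{equation-lemma-2-ii}.

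Lemma~\ref{lemma-sums} applies to every integer argument, including zero and negative ones, so no case split on the signs of $i\pm k$ is needed.
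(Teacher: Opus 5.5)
Your proof is correct and follows the same route as the paper. The paper also applies the product-to-sum identity and then Lemma~\ref{lemma-sums} to the $(i-k)$ and $(i+k)$ sums, and it likewise uses the fact that $i-k$ and $i+k$ have the same parity to merge the two brackets in part~(ii); where the paper cites tables, you carry out the computation explicitly.
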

\begin{proof}
Both results follow from the application of Lemma~\ref{lemma-sums}
to the identity:
$
\cos\left(
\frac{2\pi}{N}i\cdot n
\right)
\cdot
\cos\left(
\frac{2\pi}{N}k\cdot n
\right)
=
\frac{1}{2}
\cdot
  \cos\left(
  \frac{2\pi}{N} (i-k) n
  \right)
  +
\frac{1}{2}
\cdot
  \cos\left(
  \frac{2\pi}{N} (i+k) n
  \right)
$.
(i)~See~\cite[4.4.4.6--8, p.~641]{prudnikov1998integrals}.
(ii)~Notice that
$\iverson{\text{$(i-k)$ odd}} = \iverson{\text{$(i+k)$ odd}}$.
The zeros for the expression in item~(ii)
are in~\cite[4.4.4.5, p.~641]{prudnikov1998integrals}.
\end{proof}

\begin{corollary}
If $0\leq i,k \leq \frac{N}{2}-1$,
then~\eqref{equation-lemma-2-ii}
becomes:
\begin{align}
\sum_{n=0}^{\frac{N}{2}-1}
\cos\left(
\frac{2\pi}{N}in
\right)
\cdot
\cos\left(
\frac{2\pi}{N}kn
\right)
=
&
\frac{N}{4}
\cdot
\iverson{i-k\equiv 0 \pmod{N}}
+
\frac{N}{4}
\cdot
\iverson{i=k\equiv0 \pmod{N}}
+
\iverson{\text{$(i+k)$ odd}}
.
\end{align}

\end{corollary}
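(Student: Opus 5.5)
The plan is to specialize Lemma~\ref{lemma-sum-prod}(ii) directly. Its right-hand side depends on $i$ and $k$ only through the two Iverson brackets $\iverson{i-k\equiv 0 \pmod{N}}$ and $\iverson{i+k\equiv 0 \pmod{N}}$, and through the parity term $\iverson{\text{$(i+k)$ odd}}$. The parity term does not change. So the whole task is to determine when each of the two congruences can hold under the range restriction $0\leq i,k\leq \frac{N}{2}-1$.

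\textbf{Difference bracket.} From the bounds we get $-(\frac{N}{2}-1)\leq i-k\leq \frac{N}{2}-1$, so $|i-k|<N$. The only multiple of $N$ in this range is $0$, hence $i-k\equiv 0 \pmod{N}$ is equivalent to $i=k$. The first bracket can therefore be read either as written or as $\iverson{i=k}$.

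\textbf{Sum bracket.} From the bounds we get $0\leq i+k\leq N-2<N$, so $i+k\equiv 0\pmod N$ forces $i+k=0$. Since both indices are nonnegative, this happens exactly when $i=k=0$. This is what the compact notation $\iverson{i=k\equiv 0 \pmod{N}}$ in the statement expresses: $i=k$ together with $i\equiv 0 \pmod N$, which in the allowed range means $i=k=0$. Substituting both simplifications into \eqref{equation-lemma-2-ii} gives the displayed formula.

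There is no real obstacle here. The only point needing care is reading the notation $\iverson{i=k\equiv0 \pmod{N}}$ correctly as $\iverson{i=k=0}$. It is also worth noting that the upper bound $\frac{N}{2}-1$ is what keeps $i+k$ strictly below $N$. Without it, the case $i+k=N$ would survive in the sum bracket.
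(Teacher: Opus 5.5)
Your proposal is correct and matches the paper's proof. Both specialize Lemma~\ref{lemma-sum-prod}(ii) and use $0\leq i+k\leq N-2$ to conclude that $\iverson{i+k\equiv 0 \pmod{N}}=1$ exactly when $i=k=0$, which is what the notation $\iverson{i=k\equiv0 \pmod{N}}$ expresses. Your extra observation that the difference bracket reduces to $\iverson{i=k}$ is valid but not needed, since the statement leaves that bracket as a congruence.
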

\begin{proof}
If $i,k = 0,1,\ldots,\frac{N}{2}-1$,
then
$\iverson{i+k\equiv 0 \pmod{N}}=1$
only when $i=k\equiv0 \pmod{N}$.

\end{proof}

\subsection{The Discrete Fourier Transform}

The discrete Fourier transform relates
an $N$-point input discrete-time signal
$x[n] \in \mathbb{C}$, $n=0,1,\ldots, N-1$,
to
an $N$-point frequency-domain discrete signal
$X[k] \in \mathbb{C}$, $k=0,1,\ldots, N-1$,
according to~\cite[p.~561]{oppenheim1999discrete}:
\begin{align}
X[k]
=
\sum_{n=0}^{N-1}
x[n]
\cdot
\exp\left(
-j \frac{2\pi}{N} k n
\right)
,
\quad
k = 0, 1, \ldots, N-1
.
\end{align}
The DFT coefficient $X[k]$
can be seen
as the
inner product
of
the input signal (or vector)
$\mathbf{x} = \begin{bmatrix}x[0] & x[1] & \cdots & x[N-1]\end{bmatrix}\transp$
and
the
basis function (or vector)
$
\mathbf{f}_k =
\Big[
\exp\left(
j \frac{2\pi}{N} k n
\right)
\Big]_{n = 0,1,\ldots,N-1}\transp
$.
Thus,
we can write:
$
X[k]
=
\langle
\mathbf{x}
,
\mathbf{f}_k
\rangle_N
$.
The coefficients
$X[k]$,
$k=0,1,\ldots,N-1$,
can be grouped in a vector,
$\mathbf{X} = \begin{bmatrix}X[0] & X[1] & \cdots & X[N-1]\end{bmatrix}\transp$,
and
expressed in matrix format as
$
\mathbf{X}
=
\mathbf{F}_{N}
\cdot
\mathbf{x}
$,
where
$
\mathbf{F}_N
=
\begin{bmatrix}
\mathbf{f}_0 &
\mathbf{f}_1 &
\cdots &
\mathbf{f}_{N-1}
\end{bmatrix}\hermit
$
is the DFT matrix.
The vectors
$\mathbf{f}_k$,
$k = 0, 1, \ldots, N-1$,
can be interpreted as
discrete-time signals
defined over the interval
$n=0,1,\ldots, N-1$:
\begin{align}
f_k[n]
=
\exp\left(
j \frac{2\pi}{N} k n
\right)
,
\quad
k = 0, 1, \ldots, N-1
.
\end{align}

Vectors $\mathbf{f}_k$
(rows of $\mathbf{F}_N$)
constitute
an
orthogonal
basis for $\mathbb{C}^N$.
Indeed,
we have that:
\begin{align}
\langle
\mathbf{f}_i
,
\mathbf{f}_k
\rangle_N
=
&
\sum_{n=0}^{N-1}
f_{i}[n]
\cdot
f_{k}\hermit[n]
\\
&
=
\sum_{n=0}^{N-1}
\exp\left(
j\frac{2\pi}{N} i n
\right)
\cdot
\exp\left(
-j\frac{2\pi}{N} k n
\right)
\\
&
=
\sum_{n=0}^{N-1}
\exp\left(
j\frac{2\pi}{N} (i-k) n
\right)
\\
\label{equation-dft-ortho}
&
=
N
\cdot
\iverson{i\equiv k \pmod{N}}
,
\end{align}
which furnishes
the following
inner product table:
\begin{align}
\begin{array}{c|cccc cccc}
\langle \mathbf{f}_i, \mathbf{f}_k\rangle_{N}
&
\mathbf{f}_0 & \mathbf{f}_1 & \mathbf{f}_2 & \mathbf{f}_3 & \mathbf{f}_4 & \mathbf{f}_5 & \mathbf{f}_6 & \mathbf{f}_7
\\
\hline
\mathbf{f}_0
&
N & 0 & 0 & 0 & 0 & 0 & 0 & 0
\\
\mathbf{f}_1
&
0 & N & 0 & 0 & 0 & 0 & 0 & 0
\\
\mathbf{f}_2
&
0 & 0 & N & 0 & 0 & 0 & 0 & 0
\\
\mathbf{f}_3
&
0 & 0 & 0 & N & 0 & 0 & 0 & 0
\\
\mathbf{f}_4
&
0 & 0 & 0 & 0 & N & 0 & 0 & 0
\\
\mathbf{f}_5
&
0 & 0 & 0 & 0 & 0 & N & 0 & 0
\\
\mathbf{f}_6
&
0 & 0 & 0 & 0 & 0 & 0 & N & 0
\\
\mathbf{f}_7
&
0 & 0 & 0 & 0 & 0 & 0 & 0 & N
\\
\hline
\end{array}
\end{align}

Notice that the
periodic nature of the complex exponential
effects modular arithmetic.
While the indexes
$i$ and $k$ range from 0 to $N-1$,
the congruence
$i \equiv k \pmod N$
can be understood
$i=k$.
However,
we keep congruence instead of equality notation,
so we do not need to pay attention
to index ranges and numbers ``wrapping around''
on a case-by-case basis.

\subsection{Real Part of the DFT}

Consider the real part of the discrete Fourier transform
basis as a candidate basis for an orthogonal system.
Thus,
the candidate basis functions are:
\begin{align}
r_k[n]
=
\Re(f_k[n])
=
\cos\left(
\frac{2\pi}{N}kn
\right)
,
\quad
k,n = 0, 1, \ldots, N-1
.
\end{align}
To check whether this basis
is orthogonal,
it suffices
to compute,
for $i,k = 0, 1, \ldots, N-1$:
\begin{align}
\label{eq-real-fourier-inner-product}
\langle \mathbf{r}_i, \mathbf{r}_k\rangle_N
=
&
\sum_{n=0}^{N-1}
r_{i}[n]
\cdot
r_{k}[n]
\\
=
&
\sum_{n=0}^{N-1}
\cos\left(
\frac{2\pi}{N}in
\right)
\cdot
\cos\left(
\frac{2\pi}{N}kn
\right)
\\
=&
\frac{N}{2}
\cdot
\left(
\iverson{i-k\equiv 0 \pmod{N}}
+
\iverson{i+k\equiv 0 \pmod{N}}
\right)
.
\end{align}
It
means
that
the set of signals
$r_k[n]$,
$k = 0, 1,\ldots, N-1$,
does not form an orthogonal set,
because
$
\langle \mathbf{r}_k, \mathbf{r}_k \rangle
=
\langle \mathbf{r}_k, \mathbf{r}_{N-k} \rangle
\neq
0
$.

\paragraph{The X Pattern.}
For example,
let $N=8$.
We have that
the
inner product
calculation
furnishes
the following
``saltire''-like pattern:
\begin{align}
\label{tab-saltire}
\begin{array}{c|cccc cccc}
\langle \mathbf{r}_i, \mathbf{r}_k\rangle_N
&
\mathbf{r}_0 & \mathbf{r}_1 & \mathbf{r}_2 & \mathbf{r}_3 & \mathbf{r}_4 & \mathbf{r}_5 & \mathbf{r}_6 & \mathbf{r}_7
\\
\hline
\mathbf{r}_0
&
N & 0 & 0 & 0 & 0 & 0 & 0 & 0
\\
\mathbf{r}_1
&
0 & \frac{N}{2} & 0 & 0 & 0 & 0 & 0 & \frac{N}{2}
\\
\mathbf{r}_2
&
0 & 0 & \frac{N}{2} & 0 & 0 & 0 & \frac{N}{2} & 0
\\
\mathbf{r}_3
&
0 & 0 & 0 & \frac{N}{2} & 0 & \frac{N}{2} & 0 & 0
\\
\mathbf{r}_4
&
0 & 0 & 0 & 0 & N & 0 & 0 & 0
\\
\mathbf{r}_5
&
0 & 0 & 0 & \frac{N}{2} & 0 & \frac{N}{2} & 0 & 0
\\
\mathbf{r}_6
&
0 & 0 & \frac{N}{2} & 0 & 0 & 0 & \frac{N}{2} & 0
\\
\mathbf{r}_7
&
0 & \frac{N}{2} & 0 & 0 & 0 & 0 & 0 & \frac{N}{2}
\\
\hline
\end{array}
\end{align}

\section{Intuitive Orthogonalization}

\subsection{Real and Half-sized}
\label{section-real-half}

An approach to address the lack of orthogonality
consists of removing
from the candidate basis set
the
functions
that
make
that
the inner product computation
to
generate ambiguity.
Thus,
let us consider the new candidate basis function set
given
by:
\begin{align}
r_k[n]
=
\cos\left(
\frac{2\pi}{N}kn
\right)
,
\quad
k,n = 0, 1, \ldots, \frac{N}{2}-1
.
\end{align}
Now $k$ ranges from 0 to $\frac{N}{2}-1$.
This new candidate basis
is simply the first half the previous
candidate basis,
in the hopes
of keeping only the north-west corner
of the ``X'' pattern in~\eqref{tab-saltire}.
Notice that the functions
of this candidate set
are $\frac{N}{2}$-point long,
thus half the size of the previous candidate functions.
To verify whether this basis set
is orthogonal,
we have the following computation,
for
$i,k = 0, 1, \ldots, \frac{N}{2}-1$:
\begin{align}
\label{eq-real-half-fourier-inner-product}
\langle \mathbf{r}_i, \mathbf{r}_k\rangle_{\frac{N}{2}}
&
=
\sum_{n=0}^{\frac{N}{2}-1}
r_{i}[n]
\cdot
r_{k}[n]
\\
&
=
\sum_{n=0}^{\frac{N}{2}-1}
\cos\left(
\frac{2\pi}{N}in
\right)
\cdot
\cos\left(
\frac{2\pi}{N}kn
\right)
\\
&
=
\frac{N}{4}
\cdot
\iverson{i+k\equiv 0 \pmod{N}}
+
\frac{N}{4}
\cdot
\iverson{i \equiv k \pmod{N}}
+
\iverson{\text{$(i+k)$ odd}}
\label{equation-inner-prod-real-half}
.
\end{align}
Thus,
the inner product
$
\langle \mathbf{r}_i, \mathbf{r}_k\rangle_{\frac{N}{2}}
=
1
$,
if
$i+k$
is odd.
To better understand this effect,
consider $N=16$.
We have the following
inner product table:
\begin{align}
\label{tab-checkboard}
\begin{array}{c|cccc cccc}
\langle \mathbf{r}_i, \mathbf{r}_k\rangle_{\frac{N}{2}}
&
\mathbf{r}_0 & \mathbf{r}_1 & \mathbf{r}_2 & \mathbf{r}_3 & \mathbf{r}_4 & \mathbf{r}_5 & \mathbf{r}_6 & \mathbf{r}_7
\\
\hline
\mathbf{r}_0
&
\frac{N}{2} & 1 & 0 & 1 & 0 & 1 & 0 & 1
\\
\mathbf{r}_1
&
1 & \frac{N}{4} & 1 & 0 & 1 & 0 & 1 & 0
\\
\mathbf{r}_2
&
0 & 1 & \frac{N}{4} & 1 & 0 & 1 & 0 & 1
\\
\mathbf{r}_3
&
1 & 0 & 1 & \frac{N}{4} & 1 & 0 & 1 & 0
\\
\mathbf{r}_4
&
0 & 1 & 0 & 1 & \frac{N}{4} & 1 & 0 & 1
\\
\mathbf{r}_5
&
1 & 0 & 1 & 0 & 1 & \frac{N}{4} & 1 & 0
\\
\mathbf{r}_6
&
0 & 1 & 0 & 1 & 0 & 1 & \frac{N}{4} & 1
\\
\mathbf{r}_7
&
1 & 0 & 1 & 0 & 1 & 0 & 1 & \frac{N}{4}
\\
\hline
\end{array}
\end{align}

For instance,
$
\langle \mathbf{r}_1, \mathbf{r}_0\rangle_{\frac{N}{2}}
=
\langle \mathbf{r}_1, \mathbf{r}_2\rangle_{\frac{N}{2}}
=
\langle \mathbf{r}_1, \mathbf{r}_4\rangle_{\frac{N}{2}}
=
\langle \mathbf{r}_1, \mathbf{r}_6\rangle_{\frac{N}{2}}
\neq
0
$
.
Again,
the new candidate basis is not orthogonal,
and it seems to be even ``less orthogonal''
than the candidate basis from the previous section.

\subsection{Removing the Check Pattern I}
\label{sec-approach-1}

\subsubsection{Real, Half-sized, and Extended I}
\label{sec-real-half-ext-1}

The check pattern
is
due
to
the term
$\iverson{\text{$(i+k)$ odd}}$
in~\eqref{equation-inner-prod-real-half},
i.e.
the parity of $i+k$.
The following
quite well-known
trigonometric identity
resembles the discussed pattern:
\begin{align}
\iverson{\text{$(i+k)$ even}}
-
\iverson{\text{$(i+k)$ odd}}
&
=
\cos(\pi(i+k))
.
\end{align}
The above expression corresponds
to the following pattern:
\begin{align}
\begin{array}{c|rrrr rrrr}
i,k
&
0 & 1 & 2 & 3 & 4 & 5 & 6 & 7
\\
\hline
0
&
1 & -1 & 1 & -1 & 1 & -1 & 1 & -1
\\
1
&
-1 & 1 & -1 & 1 & -1 & 1 & -1 & 1
\\
2
&
1 & -1 & 1 & -1 & 1 & -1 & 1 & -1
\\
3
&
-1 & 1 & -1 & 1 & -1 & 1 & -1 & 1
\\
4
&
1 & -1 & 1 & -1 & 1 & -1 & 1 & -1
\\
5
&
-1 & 1 & -1 & 1 & -1 & 1 & -1 & 1
\\
6
&
1 & -1 & 1 & -1 & 1 & -1 & 1 & -1
\\
7
&
-1 & 1 & -1 & 1 & -1 & 1 & -1 & 1
\\
\hline
\end{array}
\end{align}
Now notice the following manipulation:
\begin{align}
\iverson{\text{$(i+k)$ even}}
-
\iverson{\text{$(i+k)$ odd}}
&
=
\cos(\pi(i+k))
\\
&
=
\cos(\pi i)
\cdot
\cos(\pi k)
-
\sin(\pi i)
\cdot
\sin(\pi k)
\\
&
=
\cos\left(\frac{2\pi}{N} i \frac{N}{2} \right)
\cdot
\cos\left(\frac{2\pi}{N} k \frac{N}{2} \right)
.
\end{align}
The above quantity
is
$
r_{i}\left[\frac{N}{2}\right]
\cdot
r_{k}\left[\frac{N}{2}\right]
$.
Thus,
let us extend
each basis function
to be
($\frac{N}{2}+1$)-point long
with this extra component at $n=\frac{N}{2}$.
Now let us check whether the orthogonality property is satisfied.
Considering
Lemma~\ref{lemma-sum-prod},
we have the following expressions:
\begin{align}
\langle
\mathbf{r}_i
,
\mathbf{r}_k
\rangle_{\frac{N}{2}+1}
=
&
\sum_{n=0}^{\frac{N}{2}}
r_{i}[n]
\cdot
r_{k}[n]
\\
=
&
\sum_{n=0}^{\frac{N}{2}-1}
r_{i}[n]
\cdot
r_{k}[n]
+
r_{i}\left[\frac{N}{2}\right]
\cdot
r_{k}\left[\frac{N}{2}\right]
\\
=
&
\frac{N}{4}
\cdot
\iverson{i+k\equiv 0 \pmod{N}}
+
\frac{N}{4}
\cdot
\iverson{i \equiv k \pmod{N}}
+
\iverson{\text{$(i+k)$ odd}}
\nonumber
\\
&
+
\iverson{\text{$(i+k)$ even}}
-
\iverson{\text{$(i+k)$ odd}}
\\
=
&
\frac{N}{4}
\cdot
\iverson{i+k\equiv 0 \pmod{N}}
+
\frac{N}{4}
\cdot
\iverson{i \equiv k \pmod{N}}
+
\iverson{\text{$(i+k)$ even}}
.
\label{equation-inner-prod-fail}
\end{align}
The trick did not work well.
The inner product is still
dependent
on
the parity of $i+k$,
which is given by
$\iverson{\text{$(i+k)$ even}}$.
To illustrate,
consider $N=16$.
Now we have an inverse check pattern
with ones scattered at locations of opposite parity,
as shown below:
\begin{align}
\begin{array}{c|cccc cccc c}
\langle \mathbf{r}_i, \mathbf{r}_k\rangle_{\frac{N}{2}+1}
&
\mathbf{r}_0 & \mathbf{r}_1 & \mathbf{r}_2 & \mathbf{r}_3 & \mathbf{r}_4 & \mathbf{r}_5 & \mathbf{r}_6 & \mathbf{r}_7 & \mathbf{r}_8
\\
\hline
\mathbf{r}_0
&
\frac{N}{2}+1 & 0 & 1 & 0 & 1 & 0 & 1 & 0 & 1
\\
\mathbf{r}_1
&
0 & \frac{N}{4}+1 & 0 & 1 & 0 & 1 & 0 & 1 & 0
\\
\mathbf{r}_2
&
1 & 0 & \frac{N}{4}+1 & 0 & 1 & 0 & 1 & 0 & 1
\\
\mathbf{r}_3
&
0 & 1 & 0 & \frac{N}{4}+1 & 0 & 1 & 0 & 1 & 0
\\
\mathbf{r}_4
&
1 & 0 & 1 & 0 & \frac{N}{4}+1 & 0 & 1 & 0 & 1
\\
\mathbf{r}_5
&
0 & 1 & 0 & 1 & 0 & \frac{N}{4}+1 & 0 & 1 & 0
\\
\mathbf{r}_6
&
1 & 0 & 1 & 0 & 1 & 0 & \frac{N}{4}+1 & 0 & 1
\\
\mathbf{r}_7
&
0 & 1 & 0 & 1 & 0 & 1 & 0 & \frac{N}{4}+1 & 0
\\
\mathbf{r}_8
&
1 & 0 & 1 & 0 & 1 & 0 & 1 & 0 & \frac{N}{2}+1
\\
\hline
\end{array}
\end{align}

\subsubsection{Real, Half-sized, and Extended II}
\label{sec-real-half-ext-2}

Let us try again with a small modification.
Consider the following
extension to the candidate basis functions:
\begin{align}
\check{r}_i[n]
=
\begin{cases}
r_i[n], & \text{if $n = 0, 1, \ldots, \frac{N}{2}-1$,}
\\
a
\cdot
r_i\left[\frac{N}{2}\right], & \text{if $n=\frac{N}{2}$.}
\end{cases}
\end{align}
Now the included term at $n=\frac{N}{2}$ is scaled by
a real-valued factor $a\neq 0$.
Let us repeat the check for orthogonality.
Considering~\eqref{equation-inner-prod-fail},
we have:
\begin{align}
\langle
\check{\mathbf{r}}_i
,
\check{\mathbf{r}}_k
\rangle_{\frac{N}{2}+1}
=
&
\sum_{n=0}^{\frac{N}{2}}
\check{r}_{i}[n]
\cdot
\check{r}_{k}[n]
\\
=
&
\sum_{n=0}^{\frac{N}{2}-1}
r_{i}[n]
\cdot
r_{k}[n]
+
a^2
\cdot
r_{i}\left[\frac{N}{2}\right]
\cdot
r_{k}\left[\frac{N}{2}\right]
\\
=
&
\frac{N}{4}
\cdot
\iverson{i+k\equiv 0 \pmod{N}}
+
\frac{N}{4}
\cdot
\iverson{i\equiv k \pmod{N}}
+
\iverson{\text{$(i+k)$ odd}}
\nonumber
\\
&
+
a^2
\cdot
\iverson{\text{$(i+k)$ even}}
-
a^2
\cdot
\iverson{\text{$(i+k)$ odd}}
\\
=&
\frac{N}{4}
\cdot
\iverson{i+k\equiv 0 \pmod{N}}
+
\frac{N}{4}
\cdot
\iverson{i\equiv k \pmod{N}}
\nonumber
\\
&
+
(1-a^2)\cdot
\iverson{\text{$(i+k)$ odd}}
+
a^2\cdot
\iverson{\text{$(i+k)$ even}}
.
\end{align}
To remove the dependence on
the parity of $i+k$,
the term
$
(1-a^2)\cdot
\iverson{\text{$(i+k)$ odd}}
+
a^2\cdot
\iverson{\text{$(i+k)$ even}}
$
must be constant.
It suffices to make
$1-a^2 = a^2$.
Thus,
$a = \frac{1}{\sqrt{2}}$.
Consequently,
$
\frac{1}{2}\cdot
\iverson{\text{$(i+k)$ odd}}
+
\frac{1}{2}\cdot
\iverson{\text{$(i+k)$ even}}
=
\frac{1}{2}
$
which
yields:
\begin{align}
\langle
\check{\mathbf{r}}_i
,
\check{\mathbf{r}}_k
\rangle_{\frac{N}{2}+1}
=
&
\frac{N}{4}
\cdot
\iverson{i+k\equiv 0 \pmod{N}}
+
\frac{N}{4}
\cdot
\iverson{i\equiv k \pmod{N}}
+
\frac{1}{2}
\label{equation-inner-prod-ext-bias}
.
\end{align}

This time the trick gave some result.
Instead of the check pattern,
the
inner product
possesses
a constant bias of $\frac{1}{2}$.
For $N=16$,
we have the following
inner product table:
\begin{align}
\begin{array}{c|cccc cccc c}
\langle \check{\mathbf{r}}_i, \check{\mathbf{r}}_k\rangle_{\frac{N}{2}+1}
&
\check{\mathbf{r}}_0 & \check{\mathbf{r}}_1 & \check{\mathbf{r}}_2 & \check{\mathbf{r}}_3 & \check{\mathbf{r}}_4 & \check{\mathbf{r}}_5 & \check{\mathbf{r}}_6 & \check{\mathbf{r}}_7 & \check{\mathbf{r}}_8
\\
\hline
\check{\mathbf{r}}_0
&
\frac{N}{2}+\frac{1}{2} & \frac{1}{2} & \frac{1}{2} & \frac{1}{2} & \frac{1}{2} & \frac{1}{2} & \frac{1}{2} & \frac{1}{2} & \frac{1}{2}
\\
\check{\mathbf{r}}_1
&
\frac{1}{2} & \frac{N}{4}+\frac{1}{2} & \frac{1}{2} & \frac{1}{2} & \frac{1}{2} & \frac{1}{2} & \frac{1}{2} & \frac{1}{2} & \frac{1}{2}
\\
\check{\mathbf{r}}_2
&
\frac{1}{2} & \frac{1}{2} & \frac{N}{4}+\frac{1}{2} & \frac{1}{2} & \frac{1}{2} & \frac{1}{2} & \frac{1}{2} & \frac{1}{2} & \frac{1}{2}
\\
\check{\mathbf{r}}_3
&
\frac{1}{2} & \frac{1}{2} & \frac{1}{2} & \frac{N}{4}+\frac{1}{2} & \frac{1}{2} & \frac{1}{2} & \frac{1}{2} & \frac{1}{2} & \frac{1}{2}
\\
\check{\mathbf{r}}_4
&
\frac{1}{2} & \frac{1}{2} & \frac{1}{2} & \frac{1}{2} & \frac{N}{4}+\frac{1}{2} & \frac{1}{2} & \frac{1}{2} & \frac{1}{2} & \frac{1}{2}
\\
\check{\mathbf{r}}_5
&
\frac{1}{2} & \frac{1}{2} & \frac{1}{2} & \frac{1}{2} & \frac{1}{2} & \frac{N}{4}+\frac{1}{2} & \frac{1}{2} & \frac{1}{2} & \frac{1}{2}
\\
\check{\mathbf{r}}_6
&
\frac{1}{2} & \frac{1}{2} & \frac{1}{2} & \frac{1}{2} & \frac{1}{2} & \frac{1}{2} & \frac{N}{4}+\frac{1}{2} & \frac{1}{2} & \frac{1}{2}
\\
\check{\mathbf{r}}_7
&
\frac{1}{2} & \frac{1}{2} & \frac{1}{2} & \frac{1}{2} & \frac{1}{2} & \frac{1}{2} & \frac{1}{2} & \frac{N}{4}+\frac{1}{2} & \frac{1}{2}
\\
\check{\mathbf{r}}_8
&
\frac{1}{2} & \frac{1}{2} & \frac{1}{2} & \frac{1}{2} & \frac{1}{2} & \frac{1}{2} & \frac{1}{2}  & \frac{1}{2} & \frac{N}{2}+\frac{1}{2}
\\
\hline
\end{array}
\end{align}

\paragraph{Removal of Constant Bias.}
A constant bias can be removed by scaling the values of
the basis functions at $n=0$.
This can be done because
$r_i[0]=1$, for all $i$.
So let us again consider another modification to the basis functions,
as follows:
\begin{align}
\label{equation-trick-1-a}
\hat{r}_i[n]
=
\begin{cases}
b \cdot r_i[0], & \text{if $n=0$,}
\\
r_i[n], & \text{if $n = 1, 2, \ldots, \frac{N}{2}-1$,}
\\
\frac{1}{\sqrt{2}}
\cdot
r_i[n], & \text{if $n =\frac{N}{2}$,}
\end{cases}
\end{align}
where
$b$ is a constant that must compensate the $\frac{1}{2}$ offset.
Therefore,
using~\eqref{equation-inner-prod-ext-bias},
the inner product
calculation
yields:
\begin{align}
\sum_{n=0}^{\frac{N}{2}}
\hat{r}_{i}[n]
\cdot
\hat{r}_{k}[n]
=
&
\hat{r}_{i}[0]
\cdot
\hat{r}_{k}[0]
+
\sum_{n=1}^{\frac{N}{2}}
\hat{r}_{i}[n]
\cdot
\hat{r}_{k}[n]
\\
=&
b^2
\cdot
r_{i}[0]
\cdot
r_{k}[0]
+
\left[
-1
+
\sum_{n=0}^{\frac{N}{2}}
\check{r}_{i}[n]
\cdot
\check{r}_{k}[n]
\right]
\\
=&
b^2
-1
+
\frac{N}{4}
\cdot
\iverson{i+k\equiv 0 \pmod{N}}
+
\frac{N}{4}
\cdot
\iverson{i\equiv k \pmod{N}}
+
\frac{1}{2}
\\
=&
b^2
-
\frac{1}{2}
+
\frac{N}{4}
\cdot
\iverson{i+k\equiv 0 \pmod{N}}
+
\frac{N}{4}
\cdot
\iverson{i\equiv k \pmod{N}}
.
\end{align}
The above expression
tells us
that
the
orthogonality requires
that
$b^2-\frac{1}{2} = 0$;
therefore
$b = \frac{1}{\sqrt{2}}$.
Consequently,
the orthogonal basis is obtained
by the following
calculation:
\begin{align}
\hat{\mathbf{r}}_k
=
\operatorname{diag}
(
\alpha_0, \alpha_1, \ldots, \alpha_{\frac{N}{2}}
)
\cdot
\mathbf{r}_k
,
\quad
k=0,1,\ldots,\frac{N}{2}
,
\end{align}
where
\begin{align}
\alpha_n
=
\frac{1}{\sqrt{2}}
\cdot
\iverson{n\equiv 0 \pmod{\frac{N}{2}}}
+
\iverson{n\not\equiv 0 \pmod{\frac{N}{2}}}
.
\end{align}

Still considering $N=16$,
now
we have the following
inner product table:
\begin{align}
\begin{array}{c|cccc cccc c}
\langle \hat{\mathbf{r}}_i, \hat{\mathbf{r}}_k\rangle_{\frac{N}{2}+1}
&
\hat{\mathbf{r}}_0 & \hat{\mathbf{r}}_1 & \hat{\mathbf{r}}_2 & \hat{\mathbf{r}}_3 & \hat{\mathbf{r}}_4 & \hat{\mathbf{r}}_5 & \hat{\mathbf{r}}_6 & \hat{\mathbf{r}}_7 & \hat{\mathbf{r}}_8
\\
\hline
\hat{\mathbf{r}}_0
&
\frac{N}{2} & 0 & 0 & 0 & 0 & 0 & 0 & 0 & 0
\\
\hat{\mathbf{r}}_1
&
0 & \frac{N}{4} & 0 & 0 & 0 & 0 & 0 & 0 & 0
\\
\hat{\mathbf{r}}_2
&
0 & 0 & \frac{N}{4} & 0 & 0 & 0 & 0 & 0 & 0
\\
\hat{\mathbf{r}}_3
&
0 & 0 & 0 & \frac{N}{4} & 0 & 0 & 0 & 0 & 0
\\
\hat{\mathbf{r}}_4
&
0 & 0 & 0 & 0 & \frac{N}{4} & 0 & 0 & 0 & 0
\\
\hat{\mathbf{r}}_5
&
0 & 0 & 0 & 0 & 0 & \frac{N}{4} & 0 & 0 & 0
\\
\hat{\mathbf{r}}_6
&
0 & 0 & 0 & 0 & 0 & 0 & \frac{N}{4} & 0 & 0
\\
\hat{\mathbf{r}}_7
&
0 & 0 & 0 & 0 & 0 & 0 & 0 & \frac{N}{4} & 0
\\
\hat{\mathbf{r}}_8
&
0 & 0 & 0 & 0 & 0 & 0 & 0 & 0 & \frac{N}{2}
\\
\hline
\end{array}
\end{align}
The above inner product table
shows that
basis functions
$\hat{\mathbf{r}}_0$
and
$\hat{\mathbf{r}}_{\frac{N}{2}}$
have
twice as much energy~\cite[p.~60]{oppenheim1999discrete}
than any of the other basis functions.
The basis functions can be made of equal energy
($\frac{N}{4}$)
by means of a scaling by $\sqrt{\frac{N/4}{N/2}}=\frac{1}{\sqrt{2}}$.
Thus we have the following basis functions:
\begin{align}
\label{equation-basis-1}
\beta_k \cdot \hat{\mathbf{r}}_k
,
\end{align}
where
\begin{align}
\beta_k
=
\frac{1}{\sqrt{2}}
\cdot
\iverson{k\equiv 0 \pmod{\frac{N}{2}}}
+
\iverson{k\not\equiv 0 \pmod{\frac{N}{2}}}
.
\end{align}

\begin{align}
\begin{array}{c|cccc cccc c}
\langle \beta_i \cdot \hat{\mathbf{r}}_i, \beta_k \cdot \hat{\mathbf{r}}_k \rangle_{\frac{N}{2}+1}
&
\beta_0 \cdot \hat{\mathbf{r}}_0 &
\beta_1 \cdot \hat{\mathbf{r}}_1 &
\beta_2 \cdot \hat{\mathbf{r}}_2 &
\beta_3 \cdot \hat{\mathbf{r}}_3 &
\beta_4 \cdot \hat{\mathbf{r}}_4 &
\beta_5 \cdot \hat{\mathbf{r}}_5 &
\beta_6 \cdot \hat{\mathbf{r}}_6 &
\beta_7 \cdot \hat{\mathbf{r}}_7 &
\beta_8 \cdot \hat{\mathbf{r}}_8
\\
\hline
\beta_0 \cdot \hat{\mathbf{r}}_0
&
\frac{N}{4} & 0 & 0 & 0 & 0 & 0 & 0 & 0 & 0
\\
\beta_1 \cdot \hat{\mathbf{r}}_1
&
0 & \frac{N}{4} & 0 & 0 & 0 & 0 & 0 & 0 & 0
\\
\beta_2 \cdot \hat{\mathbf{r}}_2
&
0 & 0 & \frac{N}{4} & 0 & 0 & 0 & 0 & 0 & 0
\\
\beta_3 \cdot \hat{\mathbf{r}}_3
&
0 & 0 & 0 & \frac{N}{4} & 0 & 0 & 0 & 0 & 0
\\
\beta_4 \cdot \hat{\mathbf{r}}_4
&
0 & 0 & 0 & 0 & \frac{N}{4} & 0 & 0 & 0 & 0
\\
\beta_5 \cdot \hat{\mathbf{r}}_5
&
0 & 0 & 0 & 0 & 0 & \frac{N}{4} & 0 & 0 & 0
\\
\beta_6 \cdot \hat{\mathbf{r}}_6
&
0 & 0 & 0 & 0 & 0 & 0 & \frac{N}{4} & 0 & 0
\\
\beta_7 \cdot \hat{\mathbf{r}}_7
&
0 & 0 & 0 & 0 & 0 & 0 & 0 & \frac{N}{4} & 0
\\
\beta_8 \cdot \hat{\mathbf{r}}_8
&
0 & 0 & 0 & 0 & 0 & 0 & 0 & 0 & \frac{N}{4}
\\
\hline
\end{array}
\end{align}

Fortuitously
we have that
$\beta_i = \alpha_i$,
$i=0,1,\ldots,\frac{N}{2}$.
It is worth noticing
that their roles are fundamentally different.
On the one hand,
the scaling by $\alpha_n$
acts upon the time-samples of the basis functions
by changing the values at $n=0$ and $n=\frac{N}{2}$
so that orthogonality can be achieved.
On the other hand,
the scaling by $\beta_k$
has a comparatively less important role.
It is simply a scaling factor to ensure that
the basis functions have the same energy
and facilitate
orthonormalization.
For orthonormality,
it suffices to multiply
each
basis function
$\beta_k \cdot \hat{\mathbf{r}}_k$
by $\sqrt{\frac{4}{N}}$.

\subsection{Removing the Check Pattern II}
\label{sec-approach-2}

Although
the approach described in the previous section
could furnish an orthogonal basis,
the
length of the resulting basis functions
is
$\frac{N}{2}+1$.
Such size
does not divide
the DFT length of $N$.
So,
for example,
the 16-point DFT basis functions
effect
9-point real-valued basis functions.
If there is any advantage
in keeping the basis functions
at multiple sizes,
this is lost in the discussed approach.

By re-examining the
check pattern
shown in~\eqref{tab-checkboard},
we notice that the undesirable terms $1$
can be avoid
if
$i+k$ or $i-k$
are even.
There are only two ways to achieved so:
(i)~either $i$ and $k$ are both even
or
(ii)~$i$ and $k$ are both odd.

\subsubsection{Real, Half-sized, and Even?}

Considering the case where $i$ and $k$ are even,
let
$i = 2i'$
and
$k = 2k'$,
for
$i',k'=0,1,\ldots,\frac{N}{4}-1$.
Now
we re-examine~\eqref{eq-real-half-fourier-inner-product}
and
re-write
the inner product
as:
\begin{align}
\langle
\mathbf{r}_{2i'}
,
\mathbf{r}_{2k'}
\rangle_{\frac{N}{2}}
=
&
\sum_{n=0}^{\frac{N}{2}-1}
r_{2i'}[n]
\cdot
r_{2k'}[n]
\\
=
&
\sum_{n=0}^{\frac{N}{2}-1}
\cos\left(
\frac{2\pi}{N}(2i')n
\right)
\cdot
\cos\left(
\frac{2\pi}{N}(2k')n
\right)
\\
=
&
\sum_{n=0}^{\frac{N}{2}-1}
\cos\left(
\frac{2\pi}{N/2}i'n
\right)
\cdot
\cos\left(
\frac{2\pi}{N/2}k'n
\right)
,
\end{align}
where
$i',k'=0,1,\ldots,\frac{N}{4}-1$.
For $N=32$,
we have the inner product table shown below:
\begin{align}
\begin{array}{c|cccc cccc}
\langle
\mathbf{r}_{2i'}
,
\mathbf{r}_{2k'}
\rangle_{\frac{N}{2}}
&
\mathbf{r}_0 & \mathbf{r}_2 & \mathbf{r}_4 & \mathbf{r}_6 & \mathbf{r}_0 & \mathbf{r}_{10} & \mathbf{r}_{12} & \mathbf{r}_{14}
\\
\hline
\mathbf{r}_0
&
\frac{N}{2} & 0 & 0 & 0 & 0 & 0 & 0 & 0
\\
\mathbf{r}_2
&
0 & \frac{N}{4} & 0 & 0 & 0 & 0 & 0 & 0
\\
\mathbf{r}_4
&
0 & 0 & \frac{N}{4} & 0 & 0 & 0 & 0 & 0
\\
\mathbf{r}_6
&
0 & 0 & 0 & \frac{N}{4} & 0 & 0 & 0 & 0
\\
\mathbf{r}_8
&
0 & 0 & 0 & 0 & \frac{N}{4} & 0 & 0 & 0
\\
\mathbf{r}_{10}
&
0 & 0 & 0 & 0 & 0 & \frac{N}{4} & 0 & 0
\\
\mathbf{r}_{12}
&
0 & 0 & 0 & 0 & 0 & 0 & \frac{N}{4} & 0
\\
\mathbf{r}_{14}
&
0 & 0 & 0 & 0 & 0 & 0 & 0 & \frac{N}{4}
\\
\hline
\end{array}
\end{align}

The above inner product table looks promising.
However,
notice that we have $\frac{N}{4}$
candidate basis functions,
but each function
has length of $\frac{N}{2}$.
Let us try to ``shrink'' each candidate basis function
by keeping only the first half of each function.
In other words,
each basis function has now $\frac{N}{4}$ components only.
Consequently,
for
$i',k'=0,1,\ldots,\frac{N}{4}$,
we have:
\begin{align}
\langle
\mathbf{r}_{2i'}
,
\mathbf{r}_{2k'}
\rangle_{\frac{N}{4}}
=
&
\sum_{n=0}^{\frac{N}{4}-1}
r_{2i'}[n]
\cdot
r_{2k'}[n]
\\
=
&
\sum_{n=0}^{\frac{N}{4}-1}
\cos\left(
\frac{2\pi}{N}(2i')n
\right)
\cdot
\cos\left(
\frac{2\pi}{N}(2k')n
\right)
\\
=
&
\sum_{n=0}^{\frac{N}{4}-1}
\cos\left(
\frac{2\pi}{N/2}i'n
\right)
\cdot
\cos\left(
\frac{2\pi}{N/2}k'{2}n
\right)
.
\end{align}
The above expression
is identical to~\eqref{eq-real-half-fourier-inner-product},
except
for the fact that the signal length
is halved.
So we are back to
non-orthogonality
issues
discussed in Section~\ref{section-real-half}
(see ~\eqref{tab-checkboard}).
In particular,
for $N=32$,
the inner product pattern is shown below:
\begin{align}
\label{tab-checkboard-2}
\begin{array}{c|cccc cccc}
\langle
\mathbf{r}_{2i'}
,
\mathbf{r}_{2k'}
\rangle_{\frac{N}{4}}
&
\mathbf{r}_0 & \mathbf{r}_2 & \mathbf{r}_4 & \mathbf{r}_6 & \mathbf{r}_0 & \mathbf{r}_{10} & \mathbf{r}_{12} & \mathbf{r}_{14}
\\
\hline
\mathbf{r}_0
&
\frac{N}{4} & 1 & 0 & 1 & 0 & 1 & 0 & 1
\\
\mathbf{r}_2
&
1 & \frac{N}{8} & 1 & 0 & 1 & 0 & 1 & 0
\\
\mathbf{r}_4
&
0 & 1 & \frac{N}{8} & 1 & 0 & 1 & 0 & 1
\\
\mathbf{r}_6
&
1 & 0 & 1 & \frac{N}{8} & 1 & 0 & 1 & 0
\\
\mathbf{r}_8
&
0 & 1 & 0 & 1 & \frac{N}{8} & 1 & 0 & 1
\\
\mathbf{r}_{10}
&
1 & 0 & 1 & 0 & 1 & \frac{N}{8} & 1 & 0
\\
\mathbf{r}_{12}
&
0 & 1 & 0 & 1 & 0 & 1 & \frac{N}{8} & 1
\\
\mathbf{r}_{14}
&
1 & 0 & 1 & 0 & 1 & 0 & 1 & \frac{N}{8}
\\
\hline
\end{array}
\end{align}

If the basis functions are ``shrunk''
by removing the odd- or even-indexed components,
the result is
that
the ``X''-like pattern re-appears
(see~\eqref{tab-saltire}).
Again
we are back to the previous discussed case---no success.

\subsubsection{Real, Half-sized, and Odd!}

Now let us try the alternative route:
keeping the odd-indexed basis functions only.
The
inner product calculation
furnishes the following:
\begin{align}
\langle \mathbf{r}_{2i'+1}, \mathbf{r}_{2k'+1}\rangle_{\frac{N}{2}}
=
&
\sum_{n=0}^{\frac{N}{2}-1}
r_{2i'+1}[n]
\cdot
r_{2k'+1}[n]
\\
=
&
\sum_{n=0}^{\frac{N}{2}-1}
\cos\left(
\frac{2\pi}{N}(2i'+1)n
\right)
\cdot
\cos\left(
\frac{2\pi}{N}(2k'+1)n
\right)
\\
=
&
\sum_{n=0}^{\frac{N}{2}-1}
\cos\left(
\frac{2\pi}{N/2}\frac{2i'+1}{2}n
\right)
\cdot
\cos\left(
\frac{2\pi}{N/2}\frac{2k'+1}{2}n
\right)
\\
=
&
\frac{N}{4}
\cdot
\iverson{\frac{2i'+1}{2} - \frac{2k'+1}{2} \equiv 0 \pmod{\frac{N}{2}}}
+
\frac{N}{4}
\cdot
\iverson{\frac{2i'+1}{2} + \frac{2k'+1}{2} \equiv 0 \pmod{\frac{N}{2}}}\\
=
&
\frac{N}{4}
\cdot
\iverson{i' - k' \equiv 0 \pmod{\frac{N}{2}}}
+
\frac{N}{4}
\cdot
\iverson{i' + k' + 1 \equiv 0 \pmod{\frac{N}{2}}}
,
\end{align}
Because $i',k'=0,1,\ldots,\frac{N}{4}-1$,
the quantity
$i' + k' + 1$
ranges from $1$ to $\frac{N}{2}-1$.
Thus, the condition set in
the above second Iverson bracket
is never satisfied,
and the evaluation always returns zero.
Therefore, we have:
\begin{align}
\label{equation-sum-odd-n/2}
\sum_{n=0}^{\frac{N}{2}-1}
r_{2i'+1}[n]
\cdot
r_{2k'+1}[n]
=
&
\frac{N}{4}
\cdot
\iverson{i' - k' \equiv 0 \pmod{\frac{N}{2}}}
,
\quad
i',k'=0,1,\ldots,\frac{N}{4}-1
.
\end{align}
For $N=32$,
the inner product table is shown below:
\begin{align}
\begin{array}{c|cccc cccc}
\langle \mathbf{r}_{2i'+1}, \mathbf{r}_{2k'+1}\rangle_{\frac{N}{2}}
&
\mathbf{r}_1 & \mathbf{r}_3 & \mathbf{r}_5 & \mathbf{r}_7 & \mathbf{r}_9 & \mathbf{r}_{11} & \mathbf{r}_{13} & \mathbf{r}_{15}
\\
\hline
\mathbf{r}_1
&
\frac{N}{4} & 0 & 0 & 0 & 0 & 0 & 0 & 0
\\
\mathbf{r}_3
&
0 & \frac{N}{4} & 0 & 0 & 0 & 0 & 0 & 0
\\
\mathbf{r}_5
&
0 & 0 & \frac{N}{4} & 0 & 0 & 0 & 0 & 0
\\
\mathbf{r}_7
&
0 & 0 & 0 & \frac{N}{4} & 0 & 0 & 0 & 0
\\
\mathbf{r}_9
&
0 & 0 & 0 & 0 & \frac{N}{4} & 0 & 0 & 0
\\
\mathbf{r}_{11}
&
0 & 0 & 0 & 0 & 0 & \frac{N}{4} & 0 & 0
\\
\mathbf{r}_{13}
&
0 & 0 & 0 & 0 & 0 & 0 & \frac{N}{4} & 0
\\
\mathbf{r}_{15}
&
0 & 0 & 0 & 0 & 0 & 0 & 0 & \frac{N}{4}
\\
\hline
\end{array}
\end{align}

However,
the above is not yet the sought orthogonal basis.
The candidate basis contains $\frac{N}{4}$ functions,
but each function is $\frac{N}{2}$-point long.

\subsubsection{\ldots and Half-sized Again}

Now our goal is to ``shrink''  the size of the basis functions
by removing redundancy.
Notice that
$r_k[0] = 1$.
And for odd $k$,
we have
that
$r_k[N/4] = 0$ and $r_k[N/2-n] = -r_k[n]$.
Therefore, we have:
\begin{align}
\sum_{n=0}^{\frac{N}{2}-1}
r_{2i'+1}[n]
\cdot
r_{2k'+1}[n]
=
&
1
+
\sum_{n=1}^{\frac{N}{4}-1}
r_{2i'+1}[n]
\cdot
r_{2k'+1}[n]
+
0
+
\sum_{n=1}^{\frac{N}{4}-1}
r_{2i'+1}[N/2-n]
\cdot
r_{2k'+1}[N/2-n]
\\
=
&
1
+
\sum_{n=1}^{\frac{N}{4}-1}
r_{2i+1}[n]
\cdot
r_{2k+1}[n]
+
0
+
\sum_{n=1}^{\frac{N}{4}-1}
(-r_{2i'+1}[n])
\cdot
(-r_{2k'+1}[n])
\\
=
&
1+
2
\cdot
\sum_{n=1}^{\frac{N}{4}-1}
r_{2i'+1}[n]
\cdot
r_{2k'+1}[n]
\\
=
&
-1
+
2
\cdot
\sum_{n=0}^{\frac{N}{4}-1}
r_{2i'+1}[n]
\cdot
r_{2k'+1}[n]
,
\end{align}
where
$i',k'=0,1,\ldots,\frac{N}{4}-1$.
Therefore,
by applying~\eqref{equation-sum-odd-n/2},
we have that:
\begin{align}
\label{equation-sum-odd-n/4}
\sum_{n=0}^{\frac{N}{4}-1}
r_{2i'+1}[n]
\cdot
r_{2k'+1}[n]
=
\frac{N}{8}
\cdot
\iverson{i' - k' \equiv 0 \pmod{\frac{N}{2}}}
+
\frac{1}{2}
.
\end{align}

For $N=32$,
we have the following
inner product
table:
\begin{align}
\begin{array}{c|cccc cccc}
\langle \mathbf{r}_{2i'+1}, \mathbf{r}_{2k'+1}\rangle_{\frac{N}{4}}
&
\mathbf{r}_1 & \mathbf{r}_3 & \mathbf{r}_5 & \mathbf{r}_7 & \mathbf{r}_9 & \mathbf{r}_{11} & \mathbf{r}_{13} & \mathbf{r}_{15}
\\
\hline
\mathbf{r}_1
&
\frac{N}{8}+\frac{1}{2} & \frac{1}{2} & \frac{1}{2} & \frac{1}{2} & \frac{1}{2} & \frac{1}{2} & \frac{1}{2} & \frac{1}{2}
\\
\mathbf{r}_3
&
\frac{1}{2} & \frac{N}{8}+\frac{1}{2} & \frac{1}{2} & \frac{1}{2} & \frac{1}{2} & \frac{1}{2} & \frac{1}{2} & \frac{1}{2}
\\
\mathbf{r}_5
&
\frac{1}{2} & \frac{1}{2} & \frac{N}{8}+\frac{1}{2} & \frac{1}{2} & \frac{1}{2} & \frac{1}{2} & \frac{1}{2} & \frac{1}{2}
\\
\mathbf{r}_7
&
\frac{1}{2} & \frac{1}{2} & \frac{1}{2} & \frac{N}{8}+\frac{1}{2} & \frac{1}{2} & \frac{1}{2} & \frac{1}{2} & \frac{1}{2}
\\
\mathbf{r}_9
&
\frac{1}{2} & \frac{1}{2} & \frac{1}{2} & \frac{1}{2} & \frac{N}{8}+\frac{1}{2} & \frac{1}{2} & \frac{1}{2} & \frac{1}{2}
\\
\mathbf{r}_{11}
&
\frac{1}{2} & \frac{1}{2} & \frac{1}{2} & \frac{1}{2} & \frac{1}{2} & \frac{N}{8}+\frac{1}{2} & \frac{1}{2} & \frac{1}{2}
\\
\mathbf{r}_{13}
&
\frac{1}{2} & \frac{1}{2} & \frac{1}{2} & \frac{1}{2} & \frac{1}{2} & \frac{1}{2} & \frac{N}{8}+\frac{1}{2} & \frac{1}{2}
\\
\mathbf{r}_{15}
&
\frac{1}{2} & \frac{1}{2} & \frac{1}{2} & \frac{1}{2} & \frac{1}{2} & \frac{1}{2} & \frac{1}{2} & \frac{N}{8}+\frac{1}{2}
\\
\hline
\end{array}
\end{align}
The same phenomenon
shown in~Section~\ref{sec-real-half-ext-2}
takes place
again.
Due to the additive bias of $+\frac{1}{2}$,
the basis is not orthogonal.
To address this problem,
we resort to the same technique as before.
So
consider the following candidate basis:
\begin{align}
\label{equation-trick-1-b}
\tilde{r}_{k'}[n]
=
\begin{cases}
c \cdot r_{2k'+1}[0], & \text{if $n=0$,}
\\
r_{2k'+1}[n], & \text{if $n=1,2,\ldots,\frac{N}{4}-1$,}
\end{cases}
\end{align}
for $k'=0,1,\ldots,\frac{N}{4}-1$.
Now let us check the orthogonality
for this new tentative basis.
By applying~\eqref{equation-sum-odd-n/4},
we have:
\begin{align}
\langle
\tilde{\mathbf{r}}_{i'}
,
\tilde{\mathbf{r}}_{k'}
\rangle_{\frac{N}{2}}
=
\sum_{n=0}^{\frac{N}{4}-1}
\tilde{r}_{i'}[n]
\cdot
\tilde{r}_{k'}[n]
&
=
c^2
+
\sum_{n=1}^{\frac{N}{4}-1}
r_{2i'+1}[n]
\cdot
r_{2k'+1}[n]
\\
&=
c^2
-1
+
\left(
\frac{N}{8}
\cdot
\iverson{i' - k' \equiv 0 \pmod{\frac{N}{2}}}
+
\frac{1}{2}
\right)
,
\end{align}
for
$i',k'=0,1,\ldots,\frac{N}{4}-1$.
Thus,
to satisfy
the orthogonality property,
we must impose
$c^2-\frac{1}{2} = 0$;
therefore
$c =\frac{1}{\sqrt{2}}$.

Consequently,
the sought orthogonal basis is obtained
by the following
calculation:
\begin{align}
\label{equation-basis-2}
\tilde{\mathbf{r}}_{k'}
=
\operatorname{diag}
\left(
\gamma_0,
\gamma_1,
\ldots,
\gamma_{\frac{N}{4}-1}
\right)
\cdot
\mathbf{r}_{2k'+1}
,
\quad
k'=0,1,\ldots,\frac{N}{4}-1
,
\end{align}
where
\begin{align}
\gamma_k
=
\frac{1}{\sqrt{2}}
\cdot
\iverson{k\equiv 0 \pmod{\frac{N}{4}}}
+
\iverson{k\not\equiv 0 \pmod{\frac{N}{4}}}
.
\end{align}
The resulting inner product table is below:
\begin{align}
\begin{array}{c|cccc cccc}
\langle \tilde{\mathbf{r}}_{i'}, \tilde{\mathbf{r}}_{k'}\rangle_{\frac{N}{4}}
&
\tilde{\mathbf{r}}_0 &
\tilde{\mathbf{r}}_1 &
\tilde{\mathbf{r}}_2 &
\tilde{\mathbf{r}}_3 &
\tilde{\mathbf{r}}_4 &
\tilde{\mathbf{r}}_{5} &
\tilde{\mathbf{r}}_{6} &
\tilde{\mathbf{r}}_{7}
\\
\hline
\tilde{\mathbf{r}}_0
&
\frac{N}{8} & 0 & 0 & 0 & 0 & 0 & 0 & 0
\\
\tilde{\mathbf{r}}_1
&
0 & \frac{N}{8} & 0 & 0 & 0 & 0 & 0 & 0
\\
\tilde{\mathbf{r}}_2
&
0 & 0 & \frac{N}{8} & 0 & 0 & 0 & 0 & 0
\\
\tilde{\mathbf{r}}_3
&
0 & 0 & 0 & \frac{N}{8} & 0 & 0 & 0 & 0
\\
\tilde{\mathbf{r}}_4
&
0 & 0 & 0 & 0 & \frac{N}{8} & 0 & 0 & 0
\\
\tilde{\mathbf{r}}_{5}
&
0 & 0 & 0 & 0 & 0 & \frac{N}{8} & 0 & 0
\\
\tilde{\mathbf{r}}_{6}
&
0 & 0 & 0 & 0 & 0 & 0 & \frac{N}{8} & 0
\\
\tilde{\mathbf{r}}_{7}
&
0 & 0 & 0 & 0 & 0 & 0 & 0 & \frac{N}{8}
\\
\hline
\end{array}
\end{align}
If
the basis functions
are multiplied by
$\sqrt{\frac{8}{N}}$,
then
orthonormality is ensured.

\section{Discrete Transforms}

At the end of Sections~\ref{sec-approach-1}
and~\ref{sec-approach-2},
we have obtained two sets of orthogonal functions.
In this section,
we use them to derive
the associate linear transformations.
In the following,
the quantity $N$
still
refers
to the size of the DFT
which the obtained bases were derived from.

\subsection{Transformation I}

Let
$\mathbf{x} = \begin{bmatrix}x[0] & x[1] & \cdots & x[\frac{N}{2}]\end{bmatrix}\transp$
be a input vector.
The representation
of
$\mathbf{x}$
over the basis shown in~\eqref{equation-basis-1}
is given by the following coefficients:
\begin{align}
X_1[k]
=
\left\langle
\mathbf{x}
,
\sqrt{\frac{4}{N}}
\cdot
\beta_k
\cdot \hat{\mathbf{r}}_k
\right\rangle_{\frac{N}{2}+1}
&
=
\sqrt{\frac{4}{N}}
\cdot
\beta_k
\cdot
\sum_{n=0}^\frac{N}{2}
\alpha_n
\cdot
\cos\left(
\frac{2\pi}{N}kn
\right)
\cdot
x[n]
,
\quad
k = 0,1,\ldots,\frac{N}{2}
,
\end{align}
where
\begin{align}
\alpha_i
=
\beta_i
&
=
\frac{1}{\sqrt{2}}
\cdot
\iverson{i\equiv 0 \pmod{\frac{N}{2}}}
+
\iverson{i\not\equiv 0 \pmod{\frac{N}{2}}}
.
\end{align}
Now let us re-write the above expression
in terms of the
basis functions length:
$M = \frac{N}{2}+1$.
Thus we have:
\begin{align}
X_1[k]
&
=
\sqrt{\frac{2}{M-1}}
\cdot
\beta_k
\cdot
\sum_{n=0}^{M-1}
\alpha_n
\cdot
\cos\left(
\frac{\pi}{M-1}kn
\right)
\cdot
x[n]
,
\quad
k = 0,1,\ldots,M-1
,
\end{align}
where
\begin{align}
\alpha_i
=
\beta_i
&
=
\frac{1}{\sqrt{2}}
\cdot
\iverson{i\equiv 0 \pmod{M-1}}
+
\iverson{i\not\equiv 0 \pmod{M-1}}
.
\end{align}
In matrix terms,
we have
that the transform coefficients
$\mathbf{X}_1 = \begin{bmatrix}X_1[0] & X_1[1] & \cdots & X_1[M-1]\end{bmatrix}\transp$
are given by:
\begin{align}
\mathbf{X}_1
=
\mathbf{T}_1
\cdot
\mathbf{x}
,
\end{align}
where the transformation matrix is
\begin{align}
\mathbf{T}_1
&
=
\sqrt{\frac{4}{N}}
\cdot
\left[
\beta_k
\alpha_n
\cos\left(
\frac{2\pi}{N}kn
\right)
\right]_{k,n=0,1,\ldots,\frac{N}{2}}
\\
&
=
\sqrt{\frac{2}{M-1}}
\cdot
\left[
\beta_k
\alpha_n
\cos\left(
\frac{\pi}{M-1}kn
\right)
\right]_{k,n=0,1,\ldots,M-1}
.
\end{align}
By construction,
the above matrix satisfies orthogonality.
Therefore,
$\mathbf{T}_1\transp = \mathbf{T}_1^{-1}$.
Moreover,
if we interchange the roles of matrix element indexes
$k$ and $n$,
the matrix remains unaltered.
Consequently,
it is also true
that
$\mathbf{T}_1\transp = \mathbf{T}_1$;
thus
$\mathbf{T}_1$ is involutory~\cite[p.~363]{bernstein2018scalar},
i.e.
$\mathbf{T}_1^2 = \mathbf{I}$.
In short:
$\mathbf{T}_1 = \mathbf{T}_1\transp = \mathbf{T}_1^{-1}$.

Listing~\ref{lst-t1}
provides
\textsc{gnu} Octave~\cite{eaton2024gnu}
commands to generate $\mathbf{T}_1$.
The listings shown in this letter are
not to be understood
as ``production codes''~\cite[p.~5]{golub1996matrix},
but are simply meant for educational purposes.
In Listing~\ref{lst-t1},
the variables $\texttt{a}$ and $\texttt{b}$
(lines 3 and 4)
are
numerically identical;
however,
we have kept them
separated
to emphasize their different roles:
orthogonalization
and
normalization,
respectively.

\begin{algorithm}
\caption{Matrix $\mathbf{T}_1$}
\label{lst-t1}
\begin{lstlisting}
F = dftmtx(N);
R = real(F(1:end/2+1, 1:end/2+1));
a = [1/sqrt(2) ones(1, N/2-1) 1/sqrt(2)];
b = a;
T1 = sqrt(4/N) * diag(b) * R * diag(a);
\end{lstlisting}
\end{algorithm}

\subsection{Transformation II}

Let
$\mathbf{x} = \begin{bmatrix}x[0] & x[1] & \cdots & x[\frac{N}{4}-1]\end{bmatrix}\transp$
be a input vector.
Considering the basis shown in~\eqref{equation-basis-2},
the transform coefficients
are given by:
\begin{align}
X_2[k]
=
\left\langle
\mathbf{x}
,
\sqrt{\frac{8}{N}}
\cdot
\tilde{\mathbf{r}}_{k}
\right\rangle_{\frac{N}{4}}
=
&
\sqrt{\frac{8}{N}}
\cdot
\sum_{n=0}^{\frac{N}{4}-1}
\alpha_n
\cdot
\cos\left(
\frac{2\pi}{N}
\cdot
(2k+1)
\cdot
n
\right)
\cdot
x[n]
,
\quad
k = 0,1,\ldots,\frac{N}{4}-1
,
\end{align}
where
\begin{align}
\gamma_i
&
=
\frac{1}{\sqrt{2}}
\cdot
\iverson{i\equiv 0 \pmod{\frac{N}{4}}}
+
\iverson{i\not\equiv 0 \pmod{\frac{N}{4}}}
.
\end{align}
By considering the above expression
in terms of the transformation
own blocklength,
we have $M = \frac{N}{4}$
and
therefore:
\begin{align}
X_2[k]
=
&
\sqrt{\frac{2}{M}}
\cdot
\sum_{n=0}^{M-1}
\alpha_n
\cdot
\cos\left(
\frac{\pi}{2N}
\cdot
(2k+1)
\cdot
n
\right)
\cdot
x[n]
,
\quad
k = 0,1,\ldots,M-1
,
\end{align}
where
\begin{align}
\gamma_i
&
=
\frac{1}{\sqrt{2}}
\cdot
\iverson{i\equiv 0 \pmod{M}}
+
\iverson{i\not\equiv 0 \pmod{M}}
.
\end{align}
Re-casting the above expression
in matrix format
yields the transformation:
\begin{align}
\mathbf{X}_2
=
\mathbf{T}_2
\cdot
\mathbf{x}
,
\end{align}
where
$\mathbf{X}_2 = \begin{bmatrix}X_2[0] & X_2[1] & \cdots & X_2[M-1]\end{bmatrix}\transp$
is the vector of the transform-domain
coefficients
and
\begin{align}
\mathbf{T}_2
&
=
\sqrt{\frac{8}{N}}
\cdot
\left[
\alpha_n
\cos\left(
\frac{2\pi}{N}
\cdot
(2k+1)
\cdot
n
\right)
\right]_{k,n=0,1,\ldots,\frac{N}{4}-1}
\\
&
=
\sqrt{\frac{2}{M}}
\cdot
\left[
\alpha_n
\cos\left(
\frac{\pi}{2M}
\cdot
(2k+1)
\cdot
n
\right)
\right]_{k,n=0,1,\ldots,M-1}
.
\end{align}
Again
due to its very construction,
$\mathbf{T}_2\transp = \mathbf{T}_2^{-1}$.
However,
swapping the roles of $k$ and $n$
yields
a different matrix.
Therefore,
$\mathbf{T}_2\transp \neq \mathbf{T}_2$.
Matrix $\mathbf{T}_2\transp$
is a transformation on its own right.
Let us refer to it as $\mathbf{T}_3 = \mathbf{T}_2\transp$.
Its basic properties are:
\begin{align}
\mathbf{T}_3\transp &\neq \mathbf{T}_3
\\
\mathbf{T}_3\transp &= \mathbf{T}_2
\\
\mathbf{T}_3^{-1} &=\mathbf{T}_2
\\
\mathbf{T}_3 \cdot \mathbf{T}_3\transp &= \mathbf{I}
,
\end{align}
where
the identity matrix is $\mathbf{I}$.

Listing~\ref{lst-t2}
shows
\textsc{gnu}~Octave commands
to obtain matrices
$\mathbf{T}_2$ and $\mathbf{T}_3$.

\begin{algorithm}
\centering
\caption{Matrices $\mathbf{T}_2$ and $\mathbf{T}_3$}
\label{lst-t2}
\begin{lstlisting}
F = dftmtx(N);
R = real(F(2:2:end/2, 1:end/4));
g = [1/sqrt(2) ones(1, N/4-1)];
T2 = sqrt(8/N) * R * diag(g)
T3 = transpose(T2)
\end{lstlisting}
\end{algorithm}

\section{Discussion and Concluding Remarks}

\paragraph{Results.}
It might be noteworthy
the fact that the orthogonalization exercise
described in this letter
was not directed to find matrices
with any particular property,
except being real-valued and orthogonal.
Matrices $\mathbf{T}_1$ and $\mathbf{T}_2$
were not sought
to exhibit any particular
``good'' property or
to excel in any given performance metric.
The only goal was to find---somehow---an orthogonal set
of real basis functions out of the DFT basis functions.

\paragraph{Transforms.}
Surprisingly (or not?),
the resulting basis and matrices are quite interesting.
At this point,
the reader might have already
recognized that
the orthogonal transformations
$\mathbf{T}_1$,
$\mathbf{T}_2$,
and
$\mathbf{T}_3$
are
the
\mbox{DCT-I},
\mbox{DCT-III},
and
the very popular
\mbox{DCT-II},
respectively~\cite[p.~15]{rao1990discrete}.

\paragraph{Blocklengths.}
Throughout
the derivations in this letter,
it was tacitly
assumed
that $N$ is divisible by 2 or 4,
whenever necessary.

\paragraph{Tricks.}
In this letter,
we employed a number of tricks.
Such maneuvers
are not intended to be general, systematic procedures.
They are just a ``tour de force''
aiming at orthogonalization
``at any cost''.
In fact,
they are highly dependent
on the DFT matrix symmetrical structure.

However,
one of the tricks
consists of
the removal of a constant bias from the inner product pattern;
we used it in~\eqref{equation-trick-1-a}
and~\eqref{equation-trick-1-b}.
Because it worked twice%
\footnote{``If it works once, it's a trick.
If it works twice, it's a method. [\ldots]''~\cite[p.~442]{boyd2001chebyshev}.},
let us
formalize it in the proposition
below.

We use the following notation:
the vector of ones is $\mathbf{1}$;
the matrix of ones is $\mathbf{J}=\mathbf{1}\cdot\mathbf{1}\transp$;
matrix $\mathbf{D}$ is diagonal with non-zero diagonal elements;
and
subscripts
inform matrix dimension
for clarity.

\begin{proposition}
Let
$\mathbf{A}_{N \times N}$
be
a
real
matrix
such that
$
\mathbf{A}
=
\left[
\begin{array}{c|c}
b
\cdot
\mathbf{1}
&
\mathbf{B}
\end{array}
\right]
$,
where
$b\neq0$.
If
$
\mathbf{A}
\cdot
\mathbf{A}\transp
=
\mathbf{D}
+
c
\cdot
\mathbf{J}
$,
$c\neq0$,
then
$
\left[
\begin{array}{c|c}
\pm \sqrt{b^2-c}
\cdot
\mathbf{1}
&
\mathbf{B}_{N\times(N-1)}
\end{array}
\right]
$
is orthogonal.
\end{proposition}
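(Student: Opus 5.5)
The plan is to expand the Gram matrix column by column and observe that the first column contributes exactly a constant multiple of $\mathbf{J}$. Write $\mathbf{A} = [\, b\mathbf{1} \mid \mathbf{B} \,]$. Then
\begin{align}
\mathbf{A}\cdot\mathbf{A}\transp
=
b^2 \cdot \mathbf{1}\cdot\mathbf{1}\transp + \mathbf{B}\cdot\mathbf{B}\transp
=
b^2\cdot\mathbf{J} + \mathbf{B}\cdot\mathbf{B}\transp .
\end{align}
Together with the hypothesis $\mathbf{A}\cdot\mathbf{A}\transp = \mathbf{D} + c\cdot\mathbf{J}$, this gives
\begin{align}
\mathbf{B}\cdot\mathbf{B}\transp = \mathbf{D} + (c-b^2)\cdot\mathbf{J}.
\end{align}
This is the matrix form of the step used twice in the text. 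There, the inner products computed from $n=1$ onward equaled the full inner product minus $r_i[0]\,r_k[0]=1$.

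Next, form $\tilde{\mathbf{A}} = [\, s\mathbf{1} \mid \mathbf{B} \,]$ with $s=\pm\sqrt{b^2-c}$. The same expansion gives
\begin{align}
\tilde{\mathbf{A}}\cdot\tilde{\mathbf{A}}\transp = s^2\cdot\mathbf{J} + \mathbf{D} + (c-b^2)\cdot\mathbf{J} = \mathbf{D},
\end{align}
because $s^2=b^2-c$. Therefore the Gram matrix of the rows of $\tilde{\mathbf{A}}$ is diagonal with nonzero diagonal entries. So the rows are pairwise orthogonal and nonzero, and $\tilde{\mathbf{A}}$ is orthogonal in the sense used throughout the paper (an orthogonal basis, normalizable by $\mathbf{D}^{-1/2}$). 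If the authors intend orthonormality, they should state it as: $\mathbf{D}^{-1/2}\tilde{\mathbf{A}}$ is orthogonal. Since $\tilde{\mathbf{A}}$ is square, $\tilde{\mathbf{A}}\transp\tilde{\mathbf{A}}$ then also becomes diagonal after this normalization. I will mention this only as a remark.

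The main obstacle is hygiene rather than depth. The square root must be real, so I will add or note the assumption $b^2\ge c$. The construction needs $b^2>c$ if the new first column is to be nonzero, although diagonality holds regardless. In the two applications, $b=1$ and $c=\tfrac12$, so $b^2-c=\tfrac12$ and $s=\tfrac{1}{\sqrt2}$, which recovers the constants found earlier. Two of the stated hypotheses play no real role: $c\neq0$ only makes the statement non-vacuous, and $b\neq0$ is not actually needed. I will note both briefly.
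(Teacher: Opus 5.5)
Your proof is correct and follows the paper's argument. Both use the column expansion $\mathbf{A}\mathbf{A}\transp = b^2\mathbf{J} + \mathbf{B}\mathbf{B}\transp$, then replace the first column so that the $\mathbf{J}$ terms cancel, leaving $\mathbf{D}$. The only difference is that you write the new column directly as $s\mathbf{1}$, whereas the paper writes it as $ab\mathbf{1}$ with $a=\pm\sqrt{1-c/b^2}$; your form is why you can correctly drop $b\neq0$.

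Your remark that a real square root requires $b^2\ge c$ is fair, and the paper leaves it implicit. In fact $b^2=c$ is ruled out by the hypotheses: it would give $\mathbf{B}\mathbf{B}\transp=\mathbf{D}$, but $\mathbf{B}\mathbf{B}\transp$ has rank at most $N-1$ while $\mathbf{D}$ is nonsingular.
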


\begin{proof}
The outer product rule~\cite[p.~287]{bernstein2018scalar}
tells us that
$
\mathbf{A}
\cdot
\mathbf{A}\transp
=
(b \cdot \mathbf{1})
\cdot
(b \cdot \mathbf{1})\transp
+
\mathbf{B}\cdot\mathbf{B}\transp
=
b^2 \cdot \mathbf{J}
+
\mathbf{B}\cdot\mathbf{B}\transp
$.
Let
$
\tilde{\mathbf{A}}_{N\times N}
=
\left[
\begin{array}{c|c}
a \cdot b \cdot \mathbf{1}
&
\mathbf{B}_{N\times(N-1)}
\end{array}
\right]
$,
where
$a\neq0$ is a real number.
Thus:
\begin{align}
\tilde{\mathbf{A}}
\cdot
\tilde{\mathbf{A}}\transp
&
=
(ab)^2
\cdot
\mathbf{J}
+
\mathbf{B}\cdot\mathbf{B}\transp
\\
&
=
(ab)^2
\cdot
\mathbf{J}
+
\left(
\mathbf{A}
\cdot
\mathbf{A}\transp
-
b^2
\cdot
\mathbf{J}
\right)
\\
&
=
(a^2-1)\cdot b^2
\cdot
\mathbf{J}
+
\mathbf{A}
\cdot
\mathbf{A}\transp
\\
&
=
(a^2-1)\cdot b^2
\cdot
\mathbf{J}
+
\mathbf{D}
+
c
\cdot
\mathbf{J}
\\
&
=
\left[
(a^2-1)\cdot b^2+c
\right]
\cdot
\mathbf{J}
+
\mathbf{D}
.
\end{align}
Therefore,
by setting
$a = \pm \sqrt{1 - \frac{c}{b^2}}$,
the matrix
$\tilde{\mathbf{A}}$
becomes orthogonal.
\end{proof}

\begin{corollary}
Let
$\mathbf{A}_{N\times(N-1)}$
be
a
real
matrix.
If
$
\mathbf{A}
\cdot
\mathbf{A}\transp
=
\mathbf{D}
+
c
\cdot
\mathbf{J}
$,
$c\neq0$,
then
$
\left[
\begin{array}{c|c}
\pm\sqrt{-c}
\cdot
\mathbf{1}
&
\mathbf{A}_{N\times(N-1)}
\end{array}
\right]
$
is orthogonal.
\end{corollary}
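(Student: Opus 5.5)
The plan is to argue directly with the outer product rule, in the same way as in the proof of the Proposition, rather than trying to force the Corollary into the Proposition's hypothesis. Note that the Proposition needs a first column $b\cdot\mathbf{1}$ with $b\neq 0$, whereas here $\mathbf{A}$ has only $N-1$ columns and no such column is available. So I treat $\mathbf{A}$ as the ``$\mathbf{B}$'' block of a square matrix. For a real number $a$, define
\begin{align}
\tilde{\mathbf{A}}_{N\times N}
=
\left[
\begin{array}{c|c}
a\cdot\mathbf{1}
&
\mathbf{A}_{N\times(N-1)}
\end{array}
\right].
\end{align}

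Next I apply the outer product rule~\cite[p.~287]{bernstein2018scalar} to this column partition. It gives $\tilde{\mathbf{A}}\cdot\tilde{\mathbf{A}}\transp = a^2\cdot\mathbf{J} + \mathbf{A}\cdot\mathbf{A}\transp$. Substituting the hypothesis $\mathbf{A}\cdot\mathbf{A}\transp=\mathbf{D}+c\cdot\mathbf{J}$, this becomes $\tilde{\mathbf{A}}\cdot\tilde{\mathbf{A}}\transp = \mathbf{D} + (a^2+c)\cdot\mathbf{J}$. The $\mathbf{J}$ term is the only source of off-diagonal entries, so it must vanish. That forces $a^2=-c$, i.e.\ $a=\pm\sqrt{-c}$, and then $\tilde{\mathbf{A}}\cdot\tilde{\mathbf{A}}\transp=\mathbf{D}$ is diagonal. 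This is exactly the sense of ``orthogonal'' used in the Proposition and throughout the paper: the rows are mutually orthogonal, with energies given by the diagonal of $\mathbf{D}$.

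The only real obstacle is that $\sqrt{-c}$ must be real, which requires $c<0$. I would derive this from the hypothesis instead of assuming it, using two facts. First, $\mathbf{A}\cdot\mathbf{A}\transp$ is an $N\times N$ Gram matrix of rank at most $N-1$, so it is singular. Second, if $c>0$ and $\mathbf{D}$ has a positive diagonal, then $\mathbf{D}+c\cdot\mathbf{J}$ is positive definite, hence nonsingular, which is a contradiction. The diagonal of $\mathbf{D}$ is indeed positive: it equals $\operatorname{diag}(\mathbf{A}\mathbf{A}\transp)-c$, its entries are nonzero by assumption, and they must be positive for the Gram matrix to be positive semidefinite (test with standard basis vectors, which gives $d_i+c\ge 0$; if some $d_i<0$, a two-coordinate test vector gives a contradiction). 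So $c<0$ and $\sqrt{-c}$ is a genuine real number.

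If that positivity detail turns out to be messy, I would fall back on simply noting that $c<0$ is implicitly required for the stated real matrix to make sense. This mirrors how the Proposition tacitly requires $b^2-c\ge 0$. The algebraic core of the proof, $a^2+c=0$, is unaffected either way.
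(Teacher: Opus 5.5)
Your core argument is exactly the paper's proof: augment with $a\cdot\mathbf{1}$, use the outer product rule to get $\tilde{\mathbf{A}}\tilde{\mathbf{A}}\transp=\mathbf{D}+(a^2+c)\cdot\mathbf{J}$, and set $a^2=-c$. The step that fails is your attempt to \emph{derive} $c<0$ from the hypotheses. That conclusion is false, so this part of the proposal cannot be carried out.

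The faulty claim is that the diagonal of $\mathbf{D}$ must be positive. The two-coordinate test vector $\mathbf{e}_i-\mathbf{e}_j$ kills the $\mathbf{J}$ term, since $\mathbf{1}\transp(\mathbf{e}_i-\mathbf{e}_j)=0$, and therefore yields only $d_i+d_j\ge 0$. That is no contradiction. Here is a concrete counterexample. Take $N=2$ and $\mathbf{A}=\begin{bmatrix}1 & 2\end{bmatrix}\transp$. Then $\mathbf{A}\mathbf{A}\transp=\begin{bmatrix}1&2\\2&4\end{bmatrix}=\operatorname{diag}(-1,2)+2\cdot\mathbf{J}$, so every hypothesis holds with $c=2>0$. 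Yet no real $a$ makes the rows of $\begin{bmatrix}a&1\\a&2\end{bmatrix}$ orthogonal, because their inner product is $a^2+2$.

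Your rank and positive-definiteness argument is correct as far as it goes, but it only proves that $c>0$ forces some $d_i<0$; it cannot exclude $c>0$. So $c<0$ is a genuine additional assumption, not a consequence of the hypotheses. You should adopt your fallback. That is precisely how the paper handles it: its proof ends by noting that the construction stays real-valued only as long as $c<0$.
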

\begin{proof}
Consider the augmented matrix
$
\tilde{\mathbf{A}}_{N\times N}
=
\left[
\begin{array}{c|c}
a \cdot \mathbf{1}
&
\mathbf{A}_{N\times(N-1)}
\end{array}
\right]
$.
Therefore,
we have:
\begin{align}
\tilde{\mathbf{A}}
\cdot
\tilde{\mathbf{A}}\transp
=
a^2 \cdot \mathbf{J}
+
\mathbf{A} \cdot \mathbf{A}\transp
=
a^2 \cdot \mathbf{J}
+
\mathbf{D} + c \cdot \mathbf{J}
=
(a^2+c) \cdot \mathbf{J}
+
\mathbf{D}
.
\end{align}
If $a^2+c=0$, then $\tilde{\mathbf{A}}$ is orthogonal.
As long as $c<0$, the method above keeps the resulting matrix real-valued.
\end{proof}

{\small
\singlespacing
\bibliographystyle{siam}
\bibliography{ref}
}

\end{document}